\documentclass[journal,10pt]{IEEEtran}
\usepackage[T1]{fontenc}
\usepackage{cite}
\usepackage{amsmath,amssymb,amsthm,amsfonts}
\usepackage{graphicx}
\usepackage{microtype}
\graphicspath{{figs/}}
\newtheorem{theorem}{Theorem}
\newtheorem{proposition}{Proposition}
\newtheorem{lemma}{Lemma}
\newtheorem{corollary}{Corollary}
\newtheorem{remark}{Remark}

\newcommand{\SO}{\mathrm{SO}(3)}
\newcommand{\Sph}{\mathbb{S}^2}
\newcommand{\C}{\mathbb{C}}
\newcommand{\R}{\mathbb{R}}

\title{Orientation Diversity Provably Increases the\\Angular-Domain Rank of 3D MIMO Arrays}

\author{Samer~Henry and Elvino~Sousa%
\thanks{The authors are with the Department of Electrical and Computer
Engineering, University of Toronto, Toronto, ON, Canada (e-mail:
sam.henry@mail.utoronto.ca; es.sousa@utoronto.ca).}%
\thanks{Accepted for publication in \emph{IEEE Wireless Communications
Letters}. \copyright~2026 IEEE. Personal use of this material is permitted.
Permission from IEEE must be obtained for all other uses, in any current or
future media, including reprinting/republishing this material for advertising
or promotional purposes, creating new collective works, for resale or
redistribution to servers or lists, or reuse of any copyrighted component of
this work in other works.}}

\begin{document}
\maketitle

\begin{abstract}
Movable- and rotatable-antenna systems treat element placement as a design
resource, but the supporting theory is largely position-only. We prove that
element \emph{orientation} is an effective spatial resource in its own right,
complementary to position. For collinear or coplanar arrays, the angular-domain rank is capped
by the number of projection-ambiguity classes of the scattering directions,
for \emph{every} common element orientation and pattern. Orientation-diverse
elements with sufficiently rich analytic patterns restore the maximal rank
$\min(N,L)$ for almost every orientation draw. A closed-form correlation
expression, strictly monotone in orientation spread under isotropic
scattering, quantifies the gain within the scalar-pattern model, and Monte
Carlo simulations corroborate the exact-rank and capacity behavior.
\end{abstract}

\begin{IEEEkeywords}
Angular-domain rank, antenna arrays, element orientation, MIMO, rotatable
antennas, spatial correlation.
\end{IEEEkeywords}

% ============================================================================
\section{Introduction}
\label{sec:intro}

\IEEEPARstart{F}{lexible}-antenna architectures have made element placement a
first-class optimization variable: movable antennas reposition elements
continuously \cite{ZhuMag24,ZhuTWC24}, fluid antennas select among ports
\cite{Wong21}, and six-dimensional movable antenna (6DMA) systems additionally
\emph{rotate} them \cite{ShaoTWC25,ShaoJSAC25,ZhengTCOM26}; related flexible
architectures extend to non-terrestrial and reconfigurable-intelligent-surface-aided
networks \cite{Lin21,Guo22}. The supporting
theory, however, is almost entirely position-only: capacity and
degrees-of-freedom (DoF) analyses model reconfiguration through the phase
profile $e^{\jmath\kappa p^{\mathsf T}u}$ of element positions $p$, inheriting
the classical angular-domain framework of uniform linear arrays (ULAs)
\cite{TseViswanath05,Sayeed02,PoonBrodersenTse05}. The 6DMA literature models rotation analytically through directive gain
and optimizes rotation numerically \cite{ShaoTWC25}; what is missing is a rank-theoretic
converse--achievability pair for rotation. Element \emph{response}
diversity is known to carry spatial DoF: polarimetric elements provably gain
DoF \cite{AndrewsMitra01,PoonTse11}, and pattern diversity decorrelates
co-located branches empirically \cite{vaughan1987,dietrich2001}. In the
vector model, rotating an element is a special case of response diversity;
this letter isolates the scalar-pattern mechanism and, at arbitrary fixed
positions, supplies finite-$N$ exact-rank statements with a converse,
complementing the asymptotic DoF results of \cite{PoonTse11}.
Non-uniform position geometries are likewise known to enhance identifiable
DoF \cite{PalVaidyanathan10}; empirically, three-dimensional (3D) arrays
with non-uniform element orientations show higher angular rank than planar
arrays of equal size \cite{Henry18,Henry25}.

This letter closes the gap with three exact results. \emph{(i)} A converse:
for arrays whose positions lie on a line or plane, the angular-domain rank is
capped by the number of \emph{projection-ambiguity classes} of the scattering
directions (cones for linear arrays, mirror pairs for planar arrays), for
every common element orientation and every pattern
(Proposition~\ref{prop:ceiling}). \emph{(ii)} An achievability result:
orientation-diverse elements with a sufficiently rich analytic pattern break
these ambiguities and attain the maximal rank $\min(N,L)$ for almost every
orientation draw, at \emph{arbitrary} fixed positions
(Theorem~\ref{thm:main}). \emph{(iii)} A closed form: for two coincident cosine-power elements, the
correlation is derived exactly as a strictly monotone function of
orientation spread (Proposition~\ref{prop:closed}), turning orientation
spread into a quantified design rule. Simulations confirm the theory.

% ============================================================================
\section{System Model}
\label{sec:model}

$N$ receive elements occupy positions $p_i\in\R^3$ with orientations
$R_i\in\SO$; all share a complex voltage pattern $f:\Sph\to\C$ evaluated in
the local frame, so a unit plane wave from direction $u\in\Sph$ produces
\begin{equation}
  [a(u)]_i \;=\; f(R_i^{\mathsf T}u)\,e^{\jmath\kappa p_i^{\mathsf T}u},
  \qquad i=1,\dots,N,
  \label{eq:steer}
\end{equation}
with wavenumber $\kappa=2\pi/\lambda_c$ and carrier wavelength $\lambda_c$. The narrowband channel over $L$
paths with distinct directions $u_1,\dots,u_L\in\Sph$ and nonzero gains $\beta_l$ is
$h=\sum_{l}\beta_l\,a(u_l)$; with a scattering measure $\mu$ on $\Sph$, the
receive covariance is $R_a=\int a(u)a(u)^{\mathsf H}\,d\mu(u)$. The
\emph{angular rank} over a direction set $\Theta$ is
\begin{equation}
  \zeta(\Theta)\;\triangleq\;\dim\,\mathrm{span}\{a(u):u\in\Theta\},
  \label{eq:zeta}
\end{equation}
which for $L$ discrete paths equals $\operatorname{rank}A$,
$A=[a(u_1)\cdots a(u_L)]$, and satisfies
$\operatorname{rank}R_a=\zeta(\operatorname{supp}\mu)$ for continuous $f$
(since $x^{\mathsf H}R_ax=\int|a(u)^{\mathsf H}x|^2d\mu$). For a
half-wavelength ULA of isotropic elements, \eqref{eq:zeta} coincides with
the classical count of occupied virtual (beamspace) bins
\cite{Sayeed02,TseViswanath05} when each bin holds at most one path, and
refines it otherwise; $\zeta$ upper-bounds the multiple-input
multiple-output (MIMO) channel rank, with equality when the transmit side
resolves the paths with nonzero gains. Fig.~\ref{fig:amb}
illustrates the setting. We assume $f$ real-analytic and not identically
zero; the pattern-richness condition below is stated via the spherical
harmonic degree support $\mathcal N(f)=\{n:\exists m,\
\langle f,Y_n^m\rangle\neq0\}$.

\begin{figure}[t]
  \centering
  \includegraphics[width=0.96\columnwidth]{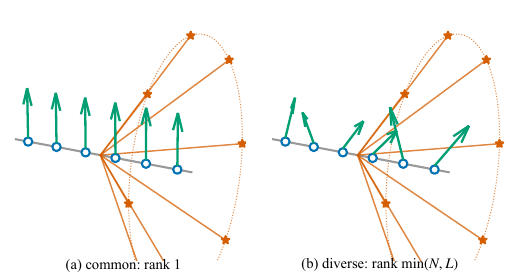}
  \caption{Collinear array receiving $L$ paths on one cone of directions
    (equal directional cosines). (a) With a common element orientation the
    steering vectors coincide up to scale: rank $1$. (b) Orientation-diverse
    directional elements separate the paths: rank $\min(N,L)$ for almost
    every orientation draw (Theorem~\ref{thm:main}).}
  \label{fig:amb}
\end{figure}

% ============================================================================
\section{Main Results}
\label{sec:results}

\subsection{The Positional Ceiling}

Let the positions lie in an affine set $V=q_0+V_0$ with
$\dim V_0=k\le2$, and call $u,u'$ \emph{$V$-equivalent} if
$\Pi_{V_0}(u-u')=0$, where $\Pi_{V_0}$ projects onto $V_0$. For $k=1$
(collinear array along $d$) the classes are the cones
$\{u: d^{\mathsf T}u=\Omega\}$, $\Omega\in[-1,1]$, of equal directional
cosine; for $k=2$
(coplanar) they are mirror pairs about the array plane. Let $L'$ be the
number of classes among $u_1,\dots,u_L$.

\begin{proposition}[Positional ceiling]
\label{prop:ceiling}
If all elements share one orientation, $R_i=\bar R$, then for every
$\bar R\in\SO$ and every pattern $f$,
$\operatorname{rank}A\le\min(N,L')$. In particular, $L$ paths on a single
cone yield $\operatorname{rank}A\le1$ for every collinear placement and
every common orientation.
\end{proposition}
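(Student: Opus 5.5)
The plan is to factor each steering vector into a scalar that depends on the full direction $u$ times a vector that depends only on the projection $\Pi_{V_0}u$. Once this is done, all columns of $A$ in one $V$-equivalence class are collinear, and the rank bound follows by counting classes.

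\emph{Step 1 (factorization).} Write $p_i=q_0+v_i$ with $v_i\in V_0$. Since $\Pi_{V_0}$ is the orthogonal projector onto $V_0$, it is symmetric and fixes $v_i$. Hence $v_i^{\mathsf T}u=v_i^{\mathsf T}\Pi_{V_0}u$. With $R_i=\bar R$, \eqref{eq:steer} becomes
\begin{equation*}
  a(u)=c(u)\,b(\Pi_{V_0}u),\qquad c(u)\triangleq f(\bar R^{\mathsf T}u)\,e^{\jmath\kappa q_0^{\mathsf T}u},\qquad [b(w)]_i\triangleq e^{\jmath\kappa v_i^{\mathsf T}w}.
\end{equation*}
The scalar $c(u)$ is common to all elements, precisely because the orientation is common. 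This is the only place the hypothesis $R_i=\bar R$ enters. No property of $f$ is used, so the bound holds for every pattern and every $\bar R$. If $f(\bar R^{\mathsf T}u)=0$, the column is simply zero, which only lowers the rank.

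\emph{Step 2 (counting).} Partition $\{u_1,\dots,u_L\}$ into its $L'$ classes. Within a class, $\Pi_{V_0}u_l$ takes a single value $w$, so every column $a(u_l)$ in that class is a scalar multiple of the single vector $b(w)$. The column space of $A$ is therefore contained in the span of at most $L'$ vectors, giving $\operatorname{rank}A\le L'$. Trivially $\operatorname{rank}A\le N$ as well, since $A$ has $N$ rows.

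\emph{Step 3 (the stated geometric specializations).} For $k=1$ with unit direction $d$, we have $\Pi_{V_0}=dd^{\mathsf T}$. Two directions are then equivalent iff $d^{\mathsf T}u=d^{\mathsf T}u'$, so the classes are the cones. A set of $L$ paths on one cone forms a single class, so $L'=1$ and $\operatorname{rank}A\le1$ for any collinear placement and any common orientation. For $k=2$ with unit normal $n$, we have $\Pi_{V_0}u=u-(n^{\mathsf T}u)n$. Two unit vectors with the same in-plane projection can differ only in the sign of $n^{\mathsf T}u$, so the classes are the mirror pairs about the array plane.

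There is no real obstacle in this argument. The only point needing care is Step 1: the projection identity must be applied to the displacements $v_i=p_i-q_0$ rather than to $p_i$ itself. The leftover phase $e^{\jmath\kappa q_0^{\mathsf T}u}$ is then absorbed into the scalar $c(u)$, which does not depend on the element index $i$.
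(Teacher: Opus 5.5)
Your proposal is correct and follows the paper's proof: the paper likewise writes $p_i=q_0+v_i$ and observes that the $i$-dependence of each column enters only through $e^{\jmath\kappa(\Pi_{V_0}u_l)^{\mathsf T}v_i}$, one fixed vector per class. Your explicit factorization $a(u)=c(u)\,b(\Pi_{V_0}u)$ and the identification of cones and mirror pairs in Step~3 simply spell out details the paper leaves implicit.
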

\begin{IEEEproof}
With $p_i=q_0+v_i$, the $i$-dependence of $[a(u_l)]_i$ enters only
through $e^{\jmath\kappa(\Pi_{V_0}u_l)^{\mathsf T}v_i}$, one fixed vector per
class; each column of $A$ is a scalar multiple of its class vector.
\end{IEEEproof}

The cone and mirror ambiguities themselves are classical
\cite{VanTrees02}; Proposition~\ref{prop:ceiling} adds the quantifier that
matters here: \emph{no} common rotation and \emph{no} pattern can remove
them, making rigorous the projection heuristic of \cite{Henry18}. Conversely, for
\emph{generic} positions filling $\R^3$, standard character-independence
arguments give $\operatorname{rank}A=\min(N,L)$ almost everywhere; the
interesting regime is constrained form factors, where positions are not free.

\subsection{Orientation Diversity Restores Full Rank}

The engine of the main result is a complete characterization of when rotated
copies of a pattern are linearly independent. For distinct
$u_1,\dots,u_L$ define $G_n=[\,Y_n^{m'}(u_l)\,]_{m',l}\in\C^{(2n+1)\times L}$
and
\begin{equation}
  K_f \;=\; \textstyle\bigcap_{n\in\mathcal N(f)}\ker G_n \;\subseteq\;\C^L.
  \label{eq:Kf}
\end{equation}

Intuitively, $K_f$ is nontrivial exactly when the pattern is too smooth to
separate the $L$ directions under rotation.
\begin{lemma}[Rotated-pattern independence]
\label{lem:pw}
The space of linear relations among the functions
$h_l(R)=f(R^{\mathsf T}u_l)$ on $\SO$ is exactly $K_f$. In particular
$\{h_l\}$ is linearly independent if and only if $K_f=\{0\}$, which holds whenever
$\{0,1,\dots,L-1\}\subseteq\mathcal N(f)$, and for every $L$ whenever
$\mathcal N(f)$ contains infinitely many even and infinitely many odd
degrees \cite{ChenMenegattoSun03}.
\end{lemma}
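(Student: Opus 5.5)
We expand $f$ in spherical harmonics and use the fact that each degree-$n$ eigenspace of the Laplace--Beltrami operator is an irreducible $\SO$-module. Write $f=\sum_{n\in\mathcal N(f)}\sum_{m} c_n^m Y_n^m$. Because $f$ is real-analytic, this series converges uniformly and absolutely on $\Sph$, so every manipulation below can be done termwise.

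\textbf{Rotation action.} For each degree $n$ let $D^n(R)$ denote the Wigner matrix, normalized so that
\[
Y_n^m(R^{\mathsf T}u)=\sum_{m'} D^n_{m m'}(R)\, Y_n^{m'}(u).
\]
With this convention,
\[
\sum_l \alpha_l h_l(R)=\sum_{n}\sum_{m,m'} c_n^m D^n_{mm'}(R)\,[G_n\alpha]_{m'}.
\]

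\textbf{Inclusion $K_f\subseteq$ relations.} If $\alpha\in K_f$, then $G_n\alpha=0$ for every $n\in\mathcal N(f)$. Every term of the sum vanishes, so $\alpha$ is a linear relation.

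\textbf{Converse.} Suppose $\sum_l\alpha_l h_l\equiv0$ on $\SO$. By Peter--Weyl, the functions $D^n_{mm'}$ are orthogonal in $L^2(\SO)$ across all $(n,m,m')$. Projecting onto $D^n_{mm'}$ therefore gives
\[
c_n^m\,[G_n\alpha]_{m'}=0 \quad\text{for all } n,m,m'.
\]
Fix $n\in\mathcal N(f)$. By definition some $c_n^m\neq0$, which forces $[G_n\alpha]_{m'}=0$ for every $m'$, i.e., $G_n\alpha=0$. Hence $\alpha\in K_f$.

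The only analytic care needed is to justify integrating the uniformly convergent series against $\overline{D^n_{mm'}}$. This is immediate, since the series converges in $L^2(\SO)$ because $\|D^n_{mm'}\|_\infty\le1$ and the coefficients are absolutely summable. This proves that the relation space is exactly $K_f$, and independence holds iff $K_f=\{0\}$.

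\textbf{First sufficient condition.} Assume $\{0,\dots,L-1\}\subseteq\mathcal N(f)$ and take $\alpha\in K_f$. For each $n\le L-1$, the row space of $G_n$ consists of all degree-$n$ harmonics, so
\[
\sum_l\alpha_l\,g(u_l)=0 \quad\text{for every harmonic } g \text{ of degree } n.
\]
The restrictions to $\Sph$ of polynomials of degree at most $L-1$ are exactly the span of the harmonics of degrees $0,\dots,L-1$. It follows that $\sum_l\alpha_l P(u_l)=0$ for every polynomial $P$ with $\deg P\le L-1$.

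Now fix $l_0$. For each $l\neq l_0$, choose an affine function vanishing at $u_l$ but not at $u_{l_0}$; this is possible because the directions are distinct. The product of these $L-1$ affine functions is a polynomial of degree $L-1$ that is nonzero only at $u_{l_0}$ among the directions. Plugging it in gives $\alpha_{l_0}=0$, so $\alpha=0$ and $K_f=\{0\}$.

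\textbf{Second sufficient condition.} Assume $\mathcal N(f)$ contains infinitely many even and infinitely many odd degrees. By the addition theorem,
\[
G_n^{\mathsf H}G_n=\tfrac{2n+1}{4\pi}\bigl[P_n(u_l^{\mathsf T}u_k)\bigr]_{l,k}.
\]
So $\alpha\in K_f$ iff $\alpha^{\mathsf H}P_n(\Gamma)\alpha=0$ for all $n\in\mathcal N(f)$, where $\Gamma=[u_l^{\mathsf T}u_k]$ is the Gram matrix and $P_n$ acts entrywise.

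This is precisely the strict positive definiteness criterion for the kernel $\sum_{n\in\mathcal N(f)}P_n(x)$ on $\Sph$ characterized in \cite{ChenMenegattoSun03}, which we invoke directly. If one wanted to reprove it, the idea is that the normalized $P_n$ evaluated at inner products $t\in(-1,1)$ decay to zero as $n\to\infty$. Antipodal pairs ($t=-1$) are handled by parity, which is exactly why both infinitely many even and infinitely many odd degrees are required. Taking limits of $\alpha^{\mathsf H}P_n(\Gamma)\alpha$ along even and along odd subsequences then forces $\alpha=0$.

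\textbf{Main obstacle.} We expect the converse step of Peter--Weyl to be routine. The main difficulty is the infinite-degree parity condition, and we plan to rely on the cited result rather than reprove it.
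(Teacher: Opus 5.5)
Your proposal follows the paper's proof essentially step for step. The paper likewise uses the harmonic expansion with the Wigner rotation law and Schur orthogonality to identify the relation space with $K_f$, the product of affine factors $1-u_m^{\mathsf T}v$ for the low-degree case, and a positive-definite zonal kernel with Chen--Menegatto--Sun for the parity case.

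There is one slip. When $\mathcal N(f)$ is infinite, your kernel $\sum_{n\in\mathcal N(f)}P_n(x)$ diverges, since already $P_n(1)=1$ for every $n$. The Chen--Menegatto--Sun criterion is stated for $\sum_n a_nP_n$ with $a_n\ge0$ and $\sum_n a_n<\infty$. You therefore need summable positive weights before invoking it; the paper uses $2^{-n}$. With those weights, $\alpha\in K_f$ annihilates the quadratic form termwise.

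Your limiting sketch in fact closes this case without the citation. Along even (resp.\ odd) degrees in $\mathcal N(f)$, $P_n(\Gamma)\to M_\pm$ entrywise. Here $M_++M_-=2I$, because distinct directions reach $t=\pm1$ only on the diagonal or at antipodal pairs. Hence $\alpha^{\mathsf H}M_\pm\alpha=0$ forces $\|\alpha\|^2=0$.
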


The cosine-power pattern
$f(v)=\big(\tfrac{1+v^{\mathsf T}n_0}{2}\big)^{q}=\cos^{2q}(\psi/2)$
(boresight $n_0$, boresight angle $\psi$) has
$\mathcal N(f)=\{0,\dots,q\}$, hence resolves any $L\le q+1$ distinct
directions. Patterns with front--back symmetry, $f(-v)=f(v)$, can never
separate antipodal paths; richness is a substantive requirement rather than a technicality.

\begin{theorem}[Orientation diversity]
\label{thm:main}
Let $u_1,\dots,u_L$ be distinct and let the positions $p_1,\dots,p_N$ be
arbitrary and fixed (possibly collinear, coplanar, or coincident).
(a)~If the analytic pattern $f$ satisfies $K_f=\{0\}$, then for
Haar-almost every $(R_1,\dots,R_N)\in\SO^N$,
\begin{equation}
  \operatorname{rank}A=\min(N,L).
  \label{eq:main}
\end{equation}
(b)~If instead $f$ is any analytic pattern and the positional phases factor
as $e^{\jmath\kappa p_i^{\mathsf T}u_l}=\alpha_i b_l$ (coincident
elements, or a collinear array with cone-supported scattering), then for
Haar-almost every draw $\operatorname{rank}A=\min(N,\,L-\dim K_f)$,
interpolating between the isotropic collapse ($\dim K_f=L-1$, rank $1$)
and full rank ($K_f=\{0\}$).
\end{theorem}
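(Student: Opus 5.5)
We treat $\operatorname{rank}A$ as a function of $(R_1,\dots,R_N)\in\SO^N$ and use the standard fact that, for entries real-analytic on the connected analytic manifold $\SO^N$, the set where a given minor vanishes is either all of $\SO^N$ or a proper analytic subset of Haar measure zero. Because $f$ is real-analytic, every entry $f(R_i^{\mathsf T}u_l)e^{\jmath\kappa p_i^{\mathsf T}u_l}$ is real-analytic in $R_i$. So in each part it suffices to show two things: the claimed rank is an upper bound for every draw, and one specific $r\times r$ minor is not identically zero, with $r$ the claimed rank. Taking the union over finitely many minors is not even needed.

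For part (a), set $r=\min(N,L)$ and fix columns $l=1,\dots,r$ and rows $i=1,\dots,r$. The minor is
\begin{equation*}
\det\big[h_l(R_i)\,e^{\jmath\kappa p_i^{\mathsf T}u_l}\big]_{i,l\le r}.
\end{equation*}
Absorb the phases into the functions by defining $g_{l}^{(i)}(R)=h_l(R)c_{il}$ with $c_{il}=e^{\jmath\kappa p_i^{\mathsf T}u_l}\neq0$. The phases depend on $i$, so this is not directly a generalized Vandermonde of a single function family, and it is the main obstacle. I would handle it by induction on $r$, expanding along the last row. For fixed generic $R_1,\dots,R_{r-1}$, the minor as a function of $R_r$ equals $\sum_l \gamma_l\, c_{rl}\, h_l(R_r)$, where $\gamma_l$ are the cofactors. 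By induction, the cofactor corresponding to columns $\{1,\dots,r\}\setminus\{l\}$ is a nonvanishing $(r-1)$-minor for generic earlier rotations, so $\gamma_r\neq0$. Since $c_{rl}\neq0$, the coefficient vector $(\gamma_lc_{rl})_l$ is nonzero. Independence of $\{h_l\}$ then makes this function of $R_r$ not identically zero. Here we invoke Lemma~\ref{lem:pw}, which gives independence exactly when $K_f=\{0\}$, and the subfamily $l\le r$ inherits independence. Hence the minor is nonzero somewhere, so generically nonzero, and $\operatorname{rank}A\ge r$. The bound $\operatorname{rank}A\le\min(N,L)$ is trivial.

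For part (b), the factorization gives $A=D_\alpha H D_b$, where $H=[h_l(R_i)]$ and $D_\alpha,D_b$ are invertible diagonal matrices. Hence $\operatorname{rank}A=\operatorname{rank}H$. Each row of $H$ is the evaluation vector $(h_l(R_i))_l$. By Lemma~\ref{lem:pw}, every such row is orthogonal (bilinearly) to $K_f$: any $c\in K_f$ satisfies $Hc=0$. This gives $\operatorname{rank}H\le L-\dim K_f$, and trivially $\operatorname{rank}H\le N$.

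For the lower bound in (b), choose a basis of $\C^L/K_f$ by picking $s=L-\dim K_f$ columns $l_1,\dots,l_s$ whose functions $h_{l_j}$ are linearly independent. These exist because the relation space is exactly $K_f$, so the span of $\{h_l\}$ has dimension $s$. Repeat the inductive minor argument of part (a) with all $c_{il}=1$ on these columns and $\min(N,s)$ rows. This yields a generically nonzero minor, so $\operatorname{rank}A=\min(N,L-\dim K_f)$ almost everywhere.

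The isotropic endpoint is a check on the formula. If $\mathcal N(f)=\{0\}$, then $G_0$ is the all-ones row (times a constant), so $\dim K_f=L-1$ and the rank is $1$.
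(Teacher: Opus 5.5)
Your proof is correct and takes essentially the same route as the paper. Both use a cofactor-expansion induction (you expand along the last row, the paper along the first) to show that the row-weighted determinant $\det[c_{il}h_l(R_i)]$ is not identically zero, then invoke the Haar-null zero set of a nontrivial real-analytic function; your part (b) via $A=D_\alpha H D_b$ likewise matches the paper's transfer of $K_f$ relations to column relations of $A$, with the minor argument rerun on a maximal independent subfamily of size $L-\dim K_f$.
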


Together, Proposition~\ref{prop:ceiling} and Theorem~\ref{thm:main} quantify
the \emph{angular-domain gain of orientation}: the rank jumps from
$\min(N,L')$ to $\min(N,L)$, and the gap can be as large as $N-1$. The
proof of Theorem~\ref{thm:main} (Appendix~\ref{app:main}) combines
Lemma~\ref{lem:pw}, a determinant induction with row-wise nonzero weights,
and the fact that nontrivial real-analytic functions on $\SO^N$ vanish only
on Haar-null sets.
\begin{remark}
\label{rem:scope}
Three qualifications delimit the claim. First, the rank gap requires
(near-)degenerate scattering geometry ($L'<L$): for generic directions
$L'=L$, and a common orientation already suffices, so away from degeneracy
the benefit of orientation diversity manifests as \emph{conditioning},
quantified here only for $N=2$ (Proposition~\ref{prop:closed}). Second, the
almost-sure statement certifies no specific design; a chosen orientation
tuple is validated by verifying that the minor $\Psi$ of
Appendix~\ref{app:main} is nonzero. Third, \emph{guaranteeing} resolution of
any $L$ directions uses harmonic content up to degree $L-1$ (Lemma~\ref{lem:pw});
paths within one ambiguity class are concyclic, forcing degree at least
$\lceil(L-1)/2\rceil$, i.e., increasingly directive elements. The exact
per-class rank is $\min(L,2q+1)$, of which roughly $q+1$ dimensions remain well
conditioned at practical dynamic range ($2$--$3$ for patch-like $q\le2$).
\end{remark}

\begin{corollary}[Covariance version]
\label{cor:cov}
Let $\mu$ be a scattering measure whose support is contained in $K$ cones
of a collinear array. For every common orientation,
$\operatorname{rank}R_a\le K$. If instead the orientations are drawn from Haar measure, the positions are
arbitrary and fixed, $f$ is analytic with
$\mathcal N(f)\supseteq\{0,\dots,N-1\}$ (or with infinitely many even and
infinitely many odd degrees in $\mathcal N(f)$), and
$\operatorname{supp}\mu$ contains at least $N$ distinct directions, then
$\operatorname{rank}R_a=N$ and $\lambda_{\min}(R_a)>0$ for Haar-almost every
draw: the rank of the receive correlation jumps from at most $K$ to $N$.
\end{corollary}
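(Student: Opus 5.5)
The corollary has two halves; I will prove them separately, using the quadratic-form identity $x^{\mathsf H}R_ax=\int|a(u)^{\mathsf H}x|^2\,d\mu(u)$ from Section~\ref{sec:model}. That identity gives $\operatorname{rank}R_a=\zeta(\operatorname{supp}\mu)$ for continuous (here analytic) $f$.

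\emph{Ceiling.} Fix a common orientation $\bar R$. As in the proof of Proposition~\ref{prop:ceiling}, every $a(u)$ with $u$ in a given cone $\{d^{\mathsf T}u=\Omega_k\}$ is a scalar multiple of one fixed vector $c_k$, namely $c_k=[e^{\jmath\kappa\Omega_k d^{\mathsf T}v_i}]_i$ up to the common phase $e^{\jmath\kappa q_0^{\mathsf T}u}$. The scalar in front is $f(\bar R^{\mathsf T}u)$ times that phase, and it does not depend on $i$. Hence $\mathrm{span}\{a(u):u\in\operatorname{supp}\mu\}\subseteq\mathrm{span}\{c_1,\dots,c_K\}$, so $\zeta\le K$ and $\operatorname{rank}R_a\le K$.

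\emph{Achievability.} Choose $N$ distinct directions $u_1,\dots,u_N\in\operatorname{supp}\mu$ and set $L=N$.
\begin{itemize}
\item By the hypothesis on $\mathcal N(f)$, Lemma~\ref{lem:pw} gives $K_f=\{0\}$ for these $N$ directions.
\item Theorem~\ref{thm:main}(a) then gives, for Haar-almost every $(R_1,\dots,R_N)$ and arbitrary fixed positions, $\operatorname{rank}[a(u_1)\cdots a(u_N)]=N$.
\item So $\zeta(\operatorname{supp}\mu)\ge N$, and since $\zeta\le N$ trivially, $\operatorname{rank}R_a=N$.
\end{itemize}
$R_a$ is an $N\times N$ positive semidefinite matrix of full rank, so $\lambda_{\min}(R_a)>0$.

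For rigor I would verify that $\operatorname{rank}R_a\ge\operatorname{rank}[a(u_l)]_l$ for $u_l\in\operatorname{supp}\mu$. Suppose $x^{\mathsf H}R_ax=0$. Then $a(u)^{\mathsf H}x=0$ for $\mu$-a.e.\ $u$. Because $u\mapsto a(u)^{\mathsf H}x$ is continuous, it vanishes on all of $\operatorname{supp}\mu$, in particular at each $u_l$. Therefore $x\perp\mathrm{span}\{a(u_l)\}=\C^N$, so $x=0$.

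\emph{Main obstacle.} The delicate point is the order of quantifiers. The exceptional null set of orientations in Theorem~\ref{thm:main} depends on the chosen directions, but here we only need one fixed choice of $N$ directions. The set of bad orientation tuples for that choice is Haar-null, and the good event for $\mu$ contains its complement. No union over uncountably many directions is required, so the statement holds almost surely.
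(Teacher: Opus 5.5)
Your proof is correct and follows the paper's own route. The ceiling is Proposition~\ref{prop:ceiling}'s class-vector argument applied to the $K$ cones containing $\operatorname{supp}\mu$, and achievability comes from Theorem~\ref{thm:main}(a), with $K_f=\{0\}$ supplied by Lemma~\ref{lem:pw} for $N$ chosen support directions, combined with $\operatorname{rank}R_a=\zeta(\operatorname{supp}\mu)$; your continuity argument for that last step and your quantifier remark just fill in details the paper leaves implicit.
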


Corollary~\ref{cor:cov} is the measure-theoretic form of the ceiling: the
discrete class count $L'$ is replaced by the number of projection-ambiguity
classes met by $\operatorname{supp}\mu$, so a continuous scattering measure
concentrated on $K$ cones is capped at $\operatorname{rank}R_a\le K$ for every
common orientation. The achievability half is
unchanged, since $\operatorname{rank}R_a=\zeta(\operatorname{supp}\mu)$ by
\eqref{eq:zeta}.

\subsection{A Closed Form, Monotone in Orientation Spread}

\begin{proposition}[Two-element closed form]
\label{prop:closed}
Let two coincident elements have pattern $\cos^{2q}(\psi/2)$ with boresights
separated by $\Delta\in[0,\pi]$, under isotropic scattering ($\mu$ the
uniform probability measure on $\Sph$). Then $R_a$ has diagonal
$\gamma=\tfrac{1}{2q+1}$ and off-diagonal $\rho(\Delta)\ge0$,
\begin{equation}
  \rho(\Delta)=\frac{(q!)^2}{(2q+1)!}\sum_{k=0}^{q}\binom{q}{k}
  \binom{q+k}{k}\,c^{k}(1-c)^{q-k},
  \label{eq:rho}
\end{equation}
where $c=\cos^{2}(\Delta/2)$. $\rho$ is strictly decreasing on $[0,\pi]$;
hence $\lambda_{\min}=\gamma-\rho(\Delta)$ strictly increases and the
condition number $\xi(\Delta)=\frac{\gamma+\rho}{\gamma-\rho}$ strictly
decreases on $(0,\pi]$. Endpoints: $\rho(0)=\gamma$ (rank $1$,
$\xi\to\infty$) and $\rho(\pi)/\gamma=1/\binom{2q}{q}$, so
$\xi(\pi)=\big(\binom{2q}{q}+1\big)/\big(\binom{2q}{q}-1\big)$, e.g.,
$\xi(\pi)=3$, $1.4$, $21/19$ for $q=1,2,3$.
\end{proposition}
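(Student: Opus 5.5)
We compute the entries of $R_a$ directly from the uniform measure on $\Sph$; coincident elements make the positional phases cancel in $a(u)a(u)^{\mathsf H}$. Write $x=(1+u^{\mathsf T}n_1)/2$ and $y=(1+u^{\mathsf T}n_2)/2$. Then $[R_a]_{11}=\mathbb E[x^{2q}]$. By Archimedes' theorem, $t=u^{\mathsf T}n_1$ is uniform on $[-1,1]$, so $x$ is uniform on $[0,1]$ and $\gamma=\int_0^1x^{2q}dx=1/(2q+1)$. The off-diagonal entry is $\rho(\Delta)=\mathbb E[x^qy^q]$, which is real and nonnegative because the pattern is nonnegative.

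For $\rho$, we use rotational invariance and place $n_1,n_2$ symmetrically about the axis $e_3$, at polar angle $\Delta/2$ in the $e_1e_3$-plane. Then $u^{\mathsf T}n_{1,2}=\cos(\Delta/2)u_3\pm\sin(\Delta/2)u_1$, and
\[
4xy=(1+\cos(\tfrac{\Delta}{2})u_3)^2-\sin^2(\tfrac{\Delta}{2})u_1^2 .
\]
We then expand $(xy)^q$ and integrate monomials $u_1^{2a}u_3^{b}$ using the standard sphere moments (Beta functions). This is the main obstacle: organizing the resulting double sum into the form \eqref{eq:rho}.

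As an alternative that should give the clean form more directly, we would use the Funk--Hecke/Legendre route. Expand $f(t)=((1+t)/2)^q$ in Legendre polynomials $f=\sum_n f_nP_n$ with $n\le q$. Funk--Hecke then gives $\rho=\sum_n f_n^2P_n(\cos\Delta)/(2n+1)$. One checks this against \eqref{eq:rho} via Jacobi/Legendre identities in the variable $c=\cos^2(\Delta/2)$, writing $P_n(\cos\Delta)=P_n(2c-1)=\sum_k\binom nk\binom{n+k}k(c-1)^{n-k}c^{k}$. The coefficients $f_n$ are explicit: $f_n=(2n+1)\,q!^2/((q-n)!(q+n+1)!)$. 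The final matching would be done by verifying both sides as polynomials in $c$ of degree $q$. It suffices to compare their values at $q+1$ points, or to derive the form from a hypergeometric identity; a symbolic check for small $q$ serves as a guard.

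Monotonicity follows from the Funk--Hecke form. Since $\rho$ is a positive combination of $P_n(\cos\Delta)$, monotonicity is not immediate from it, so we instead use the form \eqref{eq:rho}. Rewrite $c^k(1-c)^{q-k}$ weights with $w_k=\binom qk\binom{q+k}k$. Here $\rho=\frac{(q!)^2}{(2q+1)!}\sum_k \binom{q}{k}\binom{q+k}{k}c^k(1-c)^{q-k}$, which is a Bernstein polynomial in $c$ with coefficients $\binom{q+k}{k}$ strictly increasing in $k$. The derivative of a Bernstein polynomial is $q\sum_k(b_{k+1}-b_k)\binom{q-1}{k}c^k(1-c)^{q-1-k}>0$, so $\rho$ is strictly increasing in $c$. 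Since $c=\cos^2(\Delta/2)$ is strictly decreasing on $[0,\pi]$, $\rho$ is strictly decreasing in $\Delta$. The eigenvalues of the $2\times2$ matrix are $\gamma\pm\rho$. Hence $\lambda_{\min}$ increases, and $\xi=(\gamma+\rho)/(\gamma-\rho)$ decreases wherever $\rho<\gamma$, that is on $(0,\pi]$.

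For the endpoints: at $c=1$, $\rho(0)=\frac{(q!)^2}{(2q+1)!}\binom{2q}{q}=\frac1{2q+1}=\gamma$, consistent with identical elements. At $c=0$, $\rho(\pi)=\frac{(q!)^2}{(2q+1)!}$, so $\rho(\pi)/\gamma=1/\binom{2q}q$, and the stated values of $\xi(\pi)$ follow.
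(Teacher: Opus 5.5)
Your diagonal computation, the nonnegativity of $\rho$, the monotonicity argument, the eigenvalues $\gamma\pm\rho$, and the endpoint evaluations are correct and match the paper. The monotonicity argument is the Bernstein derivative with strictly increasing coefficients $\binom{q+k}{k}$, composed with the strictly decreasing $c=\cos^2(\Delta/2)$. The gap is that all of this presupposes \eqref{eq:rho}, and you never derive it. Your first route stops at what you call the main obstacle. Your Funk--Hecke route correctly gives $\rho=\sum_{n\le q}f_n^2P_n(\cos\Delta)/(2n+1)$ with the right $f_n$, but the match to \eqref{eq:rho} is only announced. Comparing values at $q+1$ points for general $q$ requires closed-form values of both sides at $q+1$ points, which is the original problem again. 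An unspecified hypergeometric identity is not an argument, and a symbolic check for small $q$ proves nothing for general $q$. Moreover, the expansion you quote, $P_n(2c-1)=\sum_k\binom nk\binom{n+k}k(c-1)^{n-k}c^k$, is false: at $n=1$ it gives $3c-1$ instead of $2c-1$. The correct forms are $\sum_k\binom nk^2(c-1)^{n-k}c^k$ or $\sum_k\binom nk\binom{n+k}k(c-1)^k$.

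The missing idea is a choice of coordinates that decouples the two factors: put the pole at one boresight $n_1$ instead of at the bisector. Then $x=s=\cos^2(\psi/2)$, and the uniform measure factors as $ds\,d\phi/(2\pi)$ with $s$ uniform on $[0,1]$. Place $n_2$ in the $\phi=0$ half-plane. Then $y=|\alpha+\beta e^{-\jmath\phi}|^2$ with $\alpha=\cos\frac\Delta2\cos\frac\psi2$ and $\beta=\sin\frac\Delta2\sin\frac\psi2$; indeed $\alpha^2+\beta^2=\frac{1+\cos\Delta\cos\psi}{2}$ and $2\alpha\beta=\frac{\sin\Delta\sin\psi}{2}$. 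By Parseval, the azimuthal average of $y^q$ is $\sum_k\binom qk^2\alpha^{2k}\beta^{2(q-k)}=\sum_k\binom qk^2c^k(1-c)^{q-k}s^k(1-s)^{q-k}$. Multiply by $s^q$ and use the Beta integral $\int_0^1s^{q+k}(1-s)^{q-k}\,ds=\frac{(q+k)!(q-k)!}{(2q+1)!}$ together with $\binom qk^2(q+k)!(q-k)!=(q!)^2\binom qk\binom{q+k}k$. This gives \eqref{eq:rho} directly as a single sum. In your symmetric placement both $x$ and $y$ depend on $u_1$ and $u_3$, which is why your sums nest and do not collapse.
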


Proposition~\ref{prop:closed} (proof in Appendix~\ref{app:closed}) is the
orientation-domain counterpart of the classical distance--correlation law:
correlation decreases monotonically in orientation separation at zero
positional separation, with an exact expression. It supplies the quantitative
design rule visible in Fig.~\ref{fig:corr}: at $\Delta=\pi$, two $q=3$ elements are nearly uncorrelated (normalized
correlation $r(\Delta)=\rho/\gamma=0.05$) despite sharing a position.

\begin{figure}[t]
  \centering
  \includegraphics[width=0.94\columnwidth]{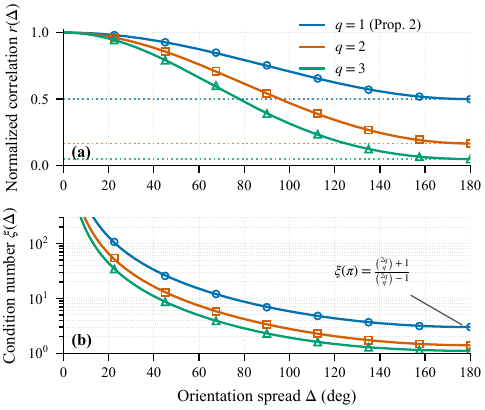}
  \caption{Proposition~\ref{prop:closed}: (a) normalized correlation
    $r(\Delta)=\rho/\gamma$ and (b) condition number $\xi(\Delta)$ for
    $q\in\{1,2,3\}$. Solid lines: exact closed form \eqref{eq:rho}; markers:
    Monte Carlo over $10^3$ isotropic-scattering realizations; dotted lines:
    exact endpoints $1/\binom{2q}{q}$.}
  \label{fig:corr}
\end{figure}

% ============================================================================
\section{Numerical Validation}
\label{sec:num}

Fig.~\ref{fig:corr} overlays Monte Carlo estimates of $r(\Delta)$ and
$\xi(\Delta)$ on the exact expressions; standard errors are below marker
size at $10^3$ realizations.

Fig.~\ref{fig:rank} instantiates Proposition~\ref{prop:ceiling},
Theorem~\ref{thm:main}, and Corollary~\ref{cor:cov} in a deliberately
degenerate worst-case geometry: $L=24$ unit-gain paths placed in
mirror-symmetric pairs on $K=3$ cones of a half-wavelength collinear array
($K'=12$ mirror classes for the coplanar comparison). Orientations are drawn
from Haar measure with pattern $f(v)=e^{v^{\mathsf T}n_0}$ (both parities,
so $K_f=\{0\}$ by Lemma~\ref{lem:pw}). Panel~(a) reports the numerical rank of $R_a$ at
machine tolerance: arrays with a common orientation saturate at their ambiguity
counts ($3$ and $12$) for every draw, while orientation-diverse elements
attain $\min(N,L)$, an eightfold exact-rank gain at $N=24$ in this
geometry. Consistent with Remark~\ref{rem:scope}, the thresholded eigenvalue
count at finite dynamic range is smaller; panel~(b) therefore reports the
operational quantity, the capacity
$C=\log_2\det\!\big(I+\tfrac{\rho}{N}R_a\big)$ at $\rho=30$\,dB (the
DoF-limited regime) under the equal-trace normalization
$\operatorname{tr}R_a=N$, averaged over orientation draws: orientation
diversity lifts the collinear array from its saturated $25.0$\,b/s/Hz to
$44.4$\,b/s/Hz at $N=24$, recovering most of the planar aperture's advantage
without leaving the line.

\begin{figure}[t]
  \centering
  \includegraphics[width=0.94\columnwidth]{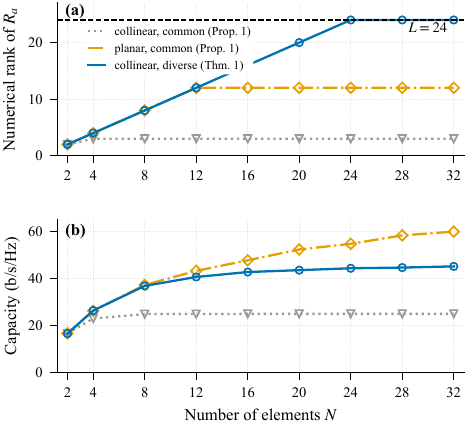}
  \caption{(a)~Numerical rank of $R_a$ at machine tolerance and
    (b)~capacity $\log_2\det(I+\tfrac{\rho}{N}R_a)$, $\rho=30$\,dB,
    $\operatorname{tr}R_a=N$, versus
    $N$ ($L=24$ paths in mirror pairs on $K=3$ cones; $K'=12$ mirror
    classes; pattern $e^{v^{\mathsf T}n_0}$). (a)~lines: exact ranks
    (Prop.~\ref{prop:ceiling}, Thm.~\ref{thm:main}), markers: Monte Carlo;
    (b)~Monte Carlo means ($10^3$ draws).}
  \label{fig:rank}
\end{figure}

\begin{figure*}[t]
  \centering
  \includegraphics[width=0.98\textwidth]{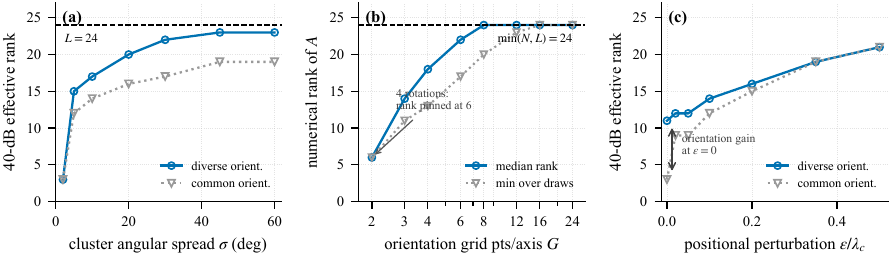}
  \caption{Robustness beyond the worst-case geometry of Fig.~\ref{fig:rank}
    ($N=L=24$, collinear array, pattern $e^{v^{\mathsf T}n_0}$).
    (a)~Von Mises--Fisher clustered scattering, $8$ paths per cluster on the
    $3$ cone centers, spread $\sigma$: median $40$-dB effective rank ($200$
    draws). (b)~Orientations drawn from a $G$-point-per-axis Euler grid:
    median and minimum numerical rank ($400$ draws). (c)~Positions perturbed
    off the array line by $\epsilon/\lambda_c$: median $40$-dB effective rank
    ($300$ draws).}
  \label{fig:robust}
\end{figure*}

Fig.~\ref{fig:robust} stress-tests the mechanism along three practical axes;
panels (a) and (c) report the $40$-dB \emph{effective} rank of $R_a$
(eigenvalues within $40$\,dB of the largest). \emph{(a)~Anisotropic scattering.} Relaxing
the cones to von Mises--Fisher clusters of spread $\sigma$ (mmWave-like sparse
multipath) makes the degeneracy generic ($L'=L$), and both arrays resolve more
modes as $\sigma$ grows; orientation diversity resolves $3$--$5$ more than a
common orientation across the sweep: away from exact degeneracy the gain takes
the conditioning form anticipated by Remark~\ref{rem:scope} (as $\sigma\to0$
the clusters merge into three sources no array resolves).
Prop.~\ref{prop:closed}'s monotone law is an isotropic-ensemble statement: for
one fixed cluster $\rho(\Delta)$ peaks as the second boresight sweeps past it;
averaged over cluster placements the monotone decrease is restored (not
shown). \emph{(b)~Discrete orientation grids.}
On a grid, rank fails only on the null variety $\{\Psi=0\}$ of the certifying
minor; a tuple is certified by checking $\Psi\neq0$ (Remark~\ref{rem:scope}).
Panel~(b) exhibits the threshold: the
four-rotation $G=2$ grid pins the rank at $6$, the median reaches $\min(N,L)$
by $G=8$, and from $G=16$ every sampled tuple is full rank: past the
threshold, quantization is immaterial to rank. \emph{(c)~Near-collinear
positions.} Theorem~\ref{thm:main}(b) ties position degeneracy to recoverable
rank via $L-\dim K_f$. At $\epsilon=0$ a common orientation resolves only the
$K=3$ cone classes while orientation diversity already resolves $11$ modes;
positional perturbation alone matches this only near
$\epsilon\approx0.35\lambda_c$: orientation supplies at $\epsilon=0$ what
position buys only by leaving the form factor.

% ============================================================================
\section{Conclusion}
\label{sec:conc}

Within a scalar far-field model, element orientation is a provably
effective spatial resource, complementary to position: it breaks the
projection ambiguities that cap the angular rank of collinear and coplanar
arrays, restores $\min(N,L)$ almost surely under an explicit
pattern-richness condition, and admits an exact monotone correlation law in
the orientation spread. The model omits polarization and mutual coupling;
pattern richness of degree $q$ implies element electrical size
$\kappa a\gtrsim q$ for element radius $a$. Polarization is the key omission: rotating a physical element also rotates its
polarization ellipse, so the rotation studied here is one component of a
full $\SO$ action on a dual-polarized pattern. We conjecture that
the two effects \emph{compound} rather than compete: orientation breaks the
scalar projection ambiguities, while polarization contributes up
to a further factor of two of rank per resolved direction in the far field
\cite{PoonTse11} (cf.\ the tripling of \cite{AndrewsMitra01}); the factor of
two is conjectural, contingent on independent polarimetric responses, not a
general result. The converse
(Proposition~\ref{prop:ceiling}) holds verbatim for single-port elements with a
vector effective length; for dual-polarized elements each ambiguity class
carries up to two polarization dimensions, and the same argument gives the
relaxed ceiling $\min(N,2L')$, quantitatively consistent with the conjecture.
Only the achievability side (Theorem~\ref{thm:main}) requires a polarimetric
generalization; this is left open. These results supply the missing
rank-theoretic foundation for rotatable and 6DMA architectures
\cite{ShaoTWC25,ZhengTCOM26} and for 3D conformal arrays on constrained
platforms, where positions cannot be freely chosen but orientations can.

% ============================================================================
\appendices
\section{Proof of Theorem~\ref{thm:main}}
\label{app:main}

\emph{Lemma~\ref{lem:pw}:} Expanding $f=\sum_{n,m}a_{nm}Y_n^m$ and using
the rotation law $Y_n^m(R^{-1}u)=\sum_{m'}D^n_{m'm}(R)Y_n^{m'}(u)$ (which
fixes our Wigner convention) with Schur orthogonality of the $D^n_{m'm}$
yields $\langle\sum_l\lambda_l h_l,\,D^n_{m'm}\rangle
=\tfrac{a_{nm}}{2n+1}[G_n\lambda]_{m'}$; completeness (Peter--Weyl) makes
the relation space exactly $K_f$ in \eqref{eq:Kf}. Sufficiency of degrees
$\{0,\dots,L-1\}$: the polynomial $\prod_{m\neq l}(1-u_m^{\mathsf T}v)$
has degree $L-1$ and vanishes at $u_m$ for every $m\neq l$ but not at $u_l$. For the
both-parity case, form the kernel $\sum_{n\in\mathcal N(f)}2^{-n}
\sum_m Y_n^m(u)\overline{Y_n^m(v)}$; a relation $\lambda\in K_f$
annihilates its quadratic form, while \cite{ChenMenegattoSun03} (applied to
$\mathrm{Re}\,\lambda$ and $\mathrm{Im}\,\lambda$) forces $\lambda=0$.

\emph{Theorem:} Let $c_{il}=e^{\jmath\kappa p_i^{\mathsf T}u_l}\neq0$ and
$L\le N$ (else pass to an $N$-subfamily). An induction on $L$ shows that for
linearly independent $h_1,\dots,h_L$ and any nonzero constants $c_{il}$,
the function $\Psi(R_1,\dots,R_L)=\det[c_{il}h_l(R_i)]$ is not identically
zero: upon expansion along the first row, the coefficient of $h_L(R_1)$ is a
nonzero multiple of the inductively nonvanishing minor, and a nontrivial
combination of independent functions is not the zero function. $|\Psi|^2$ is
real-analytic on the compact connected group $\SO^L$, and a nontrivial
real-analytic function on a second-countable connected analytic manifold
vanishes on a null set \cite{KrantzParks02}; the product with
$\SO^{N-L}$ remains null. Hence the top
$L\times L$ minor of $A$ is nonzero almost surely. For the factored-phase
case, the relations $\lambda\in K_f$ transfer to column relations of $A$
valid for every orientation tuple (with $f$ continuous), giving the upper
bound $L-\dim K_f$; achievability repeats the minor argument on a maximal
independent subfamily of $\{h_l\}$, whose size is $L-\dim K_f$ by
Lemma~\ref{lem:pw}. \hfill$\blacksquare$

\section{Proof of Proposition~\ref{prop:closed}}
\label{app:closed}

With polar angle $\psi$ from the first boresight $n_0$, azimuth $\phi$ with
$n_1$ in the $\phi=0$ half-plane, $\mu$ the uniform probability measure, and
$s=\cos^2(\psi/2)$ ($\mu$ uniform in $s$ on $[0,1]$), the diagonal is
$\gamma=\int_0^1 s^{2q}\,ds=\tfrac1{2q+1}$. For boresights separated by
$\Delta$, the identity
$\tfrac{1+u^{\mathsf T}n_1}{2}=\lvert\alpha+\beta e^{-\jmath\phi}\rvert^2$
with $\alpha=\cos\tfrac\Delta2\cos\tfrac\psi2$,
$\beta=\sin\tfrac\Delta2\sin\tfrac\psi2$ reduces the sphere average to a
binomial $\phi$-average and a Beta integral:
$\rho=\sum_k\binom qk^2c^k(1-c)^{q-k}\tfrac{(q+k)!(q-k)!}{(2q+1)!}$
(re-indexing $k\leftrightarrow q-k$), which equals \eqref{eq:rho} by
$\binom qk^2(q+k)!(q-k)!=(q!)^2\binom qk\binom{q+k}k$. Monotonicity follows from $\frac{d}{dc}\sum_k g_k b_{k,q}(c)
=q\sum_k(g_{k+1}-g_k)b_{k,q-1}(c)>0$ for the Bernstein basis $b_{k,q}$ with
$g_k=\binom{q+k}{k}$ strictly increasing; endpoints follow by substitution.
\hfill$\blacksquare$

% ============================================================================
\IEEEtriggeratref{11}

\end{document}